\theoremstyle{plain}
\theoremstyle{definition}
  \newtheorem{theorem}{Theorem}
  \newtheorem{corollary}{Corollary}
  \newtheorem{definition}{Definition}
  \newtheorem{proposition}{Proposition}
  \newtheorem{remark}{Remark}
  \theoremstyle{remark}
\def\RR{\mathbb{R}}
\DeclareMathOperator*{\argmax}{arg\,max}
\begin{document}


\title{Success functions in large contests}

\author{Yaron Azrieli\thanks{Department of Economics, The Ohio State University, 1945 North High street, Columbus, OH 43210, azrieli.2@osu.edu.}~ and Christopher P. Chambers\thanks{Department of Economics, Georgetown University, ICC 580  37th and O Streets NW, Washington DC 20057, cc1950@georgetown.edu}}

\maketitle

\begin{abstract}
We consider contests with a large set (continuum) of participants and axiomatize contest success functions that arise when performance is composed of both effort and a random element, and when winners are those whose performance exceeds a cutoff determined by a market clearing condition. A co-monotonicity property is essentially all that is needed for a representation in the general case, but significantly stronger conditions must hold to obtain an additive structure. We illustrate the usefulness of this framework by revisiting some of the classic questions in the contests literature. 




\end{abstract}

\bigskip
\bigskip
\bigskip
\bigskip
\bigskip
\bigskip
\bigskip
\bigskip
\bigskip
\bigskip
\bigskip
\bigskip

 \begin{quote}
``The idea of a continuum of traders may seem outlandish to the reader. Actually, it is no stranger than a continuum of prices or of strategies or a continuum of ``particles'' in fluid mechanics\ldots ~It should be emphasized that our consideration of a continuum of traders is not merely a mathematical exercise; it is the expression of an economic idea.''

\hfill In: \emph{Markets with a continuum of traders}, Robert J. Aumann, Econometrica 1964
 \end{quote}

\newpage
\section{Introduction}\label{sec-introduction}

Models of contests, in which agents compete for awards by making irrevocable investments, have been used in the literature to analyze a wide variety of phenomena. Starting from the seminal works of \citet{tullock1980efficient, tullock2001efficient} and \citet{lazear1981rank}, many of these models assume that the identity of the winner(s) is not a deterministic function of agents' choices, i.e., that the outcome has some randomness to it. The mapping from strategy profiles to the probability of winning of each participant is known as a \emph{Contest Success Function (CSF)}; it is a key ingredient of the model determining agents' incentives and consequentially the equilibria of the game.

We study CSFs when the set of participants is large (a continuum). As demonstrated in the literature (e.g., \citet{olszewski2016large, olszewski2019bid, olszewski2020performance}, \citet{morgan2018ponds}, \citet{adda2023grantmaking},  \citet{azrieli2024temporary}), as well as in the last section of this paper, working ``in the limit'' with an infinite set of contestants is often more tractable than with a large but finite set. However, specifying a CSF in this context is not as straightforward as in the finite case, and the properties of such functions are not well understood. The main contribution of this work is to axiomatize a class of CSFs that we refer to as \emph{Random Performance Functions (RPFs)}. Roughly speaking, in an RPF the effort choice of an agent is randomly mapped into performance, and the agent wins if their performance exceeds a cutoff set by the principal; the cutoff is adjusted based on the distribution of efforts in the population to satisfy a market clearing condition, namely, that the total mass of winners is equal to the budget.\footnote{\citet{morgan2018ponds}, \citet{adda2023grantmaking} and  \citet{azrieli2024temporary} all use RPFs in their models.}

The first main result characterizes RPFs without assuming any particular structure of the mapping from effort to performance; the only restriction is that higher effort implies higher distribution of performance in the sense of first-order stochastic dominance. We show that the key property needed for such a representation is Co-monotonicity, which means that if an agent exerting effort $e$ has a higher winning probability when facing competition (distribution of efforts) $p$ than when facing competition $p'$, then the same is true for any other effort level $e'$. In other words, under Co-monotonicity one can rank effort distributions $p$ based on how competitive they are in an unambiguous way. Theorem \ref{thm-characterization} shows that this axiom, along with standard continuity and monotonicity properties, are necessary and sufficient for a CSF to be an RPF. 

In Theorem \ref{thm-A-RPF} we add more structure and characterize the popular specification where performance is the sum of effort plus a noise term. As can be expected, this is much more demanding as it requires invariance of the winning probability to certain shifts of the effort distribution. We further show that in this case the distribution of noise can be identified from the CSF up to translations.

We view these two theorems as providing a foundation for the use of RPFs in models of large contests. Isolating the necessary and sufficient conditions clarifies the structure of agents' incentives under this type of CSF. Alternatively, in environments where the CSF can be estimated from observable data our results facilitate testing whether an RPF is an appropriate modeling choice. As an example, consider competition among scientists for research grants: The quality distribution of submissions in a given cycle can be estimated ex-post by the impact of each proposal, and so does the rate of success of each quality level; one can then test whether Co-monotonicity holds to determine if an RPF captures the evaluation process of proposals.


In the finite-agent case, the CSF proposed by \citet{tullock1980efficient} and its generalizations are by far the most popular. \citet{skaperdas1996contest} provides the first axiomatization of this class by using a population consistency condition. \citet{clark1998contest} provide an asymmetric generalization of the homogeneous special case of \citet{skaperdas1996contest}, interpreting the consistency condition as an analogue of the choice axiom of \citet{luce1959individual}. Other generalizations include \citet{myerson2006population} for the case of unknown population size, \citet{munster2009group} who considers contests between groups, and \citet{rai2009generalized} where agents' investments are multidimensional. \citet{berry1993rent} was the first to propose an extension of Tullock's CSF to the case of multiple prizes as we have here; a critique of Berry's extension and an alternative specification were suggested by \citet{clark1996multi}. \citet{fu2012micro} elaborate on the connection between noisy performance tournaments as in \citet{lazear1981rank} and the model of \citet{clark1996multi}. See \citet{sisak2009multiple} for a survey of the literature on multi-prize contests.

Recently there have been several attempts to extend Tullock-type CSFs to contests with a continuum of agents, see for example \citet{lahkar2023optimal, lahkar2023rent} and \citet{dougan2023large}. The goal of these papers is to obtain explicit characterization of the equilibrium in cases where this is not possible with a finite set of contestants. However, the range of these proposed CSFs is not bounded above by one, so it is hard to interpret them as winning probabilities or shares of a prize. The RPFs axiomatized in this paper do not have this issue. While we do not obtain closed form expressions for equilibria, we illustrate in Section \ref{sec-applications} that they can be used to address some of the classic questions in the contests literature.

\section{Contest success functions}\label{sec-CSF}

There is a continuum of agents with total mass one that can choose whether to participate in a contest and if so how much effort to exert. The contest designer's budget is sufficient to give prizes to a mass of $k\in (0,1)$ agents. We refer to $k$ as the budget and fix it until Section \ref{sec-comparison}.\footnote{We assume that all winners receive the same prize so that there are only two possible outcomes -- win or lose -- for each agent.} 

Let $E=(0,\infty)$ be the set of possible effort levels agents can choose from. A \emph{measure on $E$} is a non-negative, $\sigma$-additive function defined over the Borel sets in $E$. We denote by $\Delta_k$ the set of measures $p$ on $E$ with $p(E)\in (k,1]$. Each $p\in \Delta_k$ is interpreted as a potential `distribution of efforts' corresponding to some strategy profile. We sometimes refer to $p$ as the \emph{competition}. We allow for $p(E)<1$ to capture situations where not all agents choose to participate in the contest, namely, $p(E)$ is the fraction who participate and $1-p(E)$ stay out. The restriction to $p(E)>k$ guarantees that there is not enough budget to reward all participants; if $p(E)\le k$ we simply assume that any effort level guarantees a win.

It will be convenient to use the following notation. For any $e\in E$ the Dirac measure $\delta_e\in \Delta_k$ is defined by $\delta_e(A)=1$ if $e\in A$ and $\delta_e(A)=0$ otherwise. If $p\in \Delta_k$ and $\bar e\in E$ then $p\dotplus \bar e$ is the `right-shift' of $p$ by $\bar e$, namely, the measure defined by $(p\dotplus \bar e)(A)= p (\{e-\bar e: e\in A\cap (\bar e,+\infty)\})$. Finally, we write $\int$ rather than $\int_E$ whenever the domain of integration is $E$.

The object of interest is a mapping $W:E\times \Delta_k \to [0,1]$. Here $W(e,p)$ is interpreted as the probability of winning for an agent that exerts effort $e$ given that they face competition $p$.\footnote{Note that $W(e,p)$ is defined even if $e$ is not in the support of $p$.  This allows one to determine whether a given strategy profile is a Nash equilibrium by considering all possible effort deviations.} Following the literature, we refer to $W$ as a \emph{Contest Success Function}. 

\begin{definition}\label{def-CSF}
    A Contest Success Function (CSF) is a mapping $W:E\times \Delta_k \to [0,1]$ such that, for any fixed $p\in \Delta_k$, $W(e,p)$ is measurable in $e$ and satisfies 
    \begin{equation}\label{eqn-BB}
    \int W(e,p)dp(e) = k.
\end{equation}
\end{definition}

Equation (\ref{eqn-BB}) is a market clearing condition requiring that for any $p$ the total mass of winners is equal to $k$. In other words, the designer spends their entire budget irrespective of the distribution of efforts.\footnote{This is similar to the constraint that the sum of winning probabilities of all agents is 1 in the classic finite setup, see for example property A1 in \citet{skaperdas1996contest}.}

\section{Random Performance Functions}\label{sec-RPF}

We are interested in characterizing those CSFs $W$ that arise as a result of the following process: The performance of an agent that exerts effort $e$ is randomly drawn according to a distribution that depends on $e$, where higher effort is more likely to result in better performance; the designer sets a threshold performance value $s$ and the winners are those whose performances exceed this cutoff; the threshold $s$ is adjusted based on the competition $p$ so that (\ref{eqn-BB}) is satisfied. This is formalized in the next definition.    

\begin{definition}\label{def-RPF}
    A CSF $W$ is a \emph{Random Performance Function (RPF)} if there exists a collection of cumulative distribution functions $\{F_e\}_{e\in E}$ such that the following hold:
    \begin{enumerate}
        \item Each $F_e$ is continuous and strictly increasing.
        
        \item For each $x\in \RR$, the mapping $e\mapsto F_e(x)$ is continuous, strictly decreasing, and satisfies $\lim_{e\uparrow +\infty} F_e(x)=0$.

        \item For each $e\in E$ and $p\in \Delta_k$ it holds that $W(e,p)=1-F_e(s(p))$, where $s(p)\in \RR$ is defined as the unique solution to the equation $\int [1-F_e(s(p))]dp(e)=k$.  
    \end{enumerate}
\end{definition}

The first condition in the definition is for technical reasons. The second implies that the distribution of performance for higher effort levels first-order stochastically dominates the distribution for lower effort levels, and that very high effort is likely to result in high performance. Condition 3 means that $W$ can be represented by the collection $\{F_e\}_{e\in E}$. 

\begin{remark}\label{remark-s(p)}
    Existence and uniqueness of $s(p)$ that satisfies the equality in the above definition is guaranteed by our assumptions. Indeed, by property 2 of the definition $1-F_e(s)$ is a measurable function of $e$ so the integral $\int [1-F_e(s)]dp(e)$ is well-defined, and this integral is continuous and strictly increasing in $s$ by part 1 of the definition. Further, this integral converges to zero as $s\uparrow +\infty$ and converges to $p(E)>k$ as $s\downarrow -\infty$. Thus, there exists a unique $s$ such that $\int [1-F_e(s)]dp(e)=k$ holds.
\end{remark}

\subsection{Characterization}

To characterize RPFs we introduce the following properties. 

\bigskip

\noindent \textbf{$e$-Continuity:} For any fixed $p\in \Delta_k$, $W(e,p)$ is continuous in $e$.

\medskip

\noindent \textbf{$p$-Continuity:} For any fixed $e\in E$ and $p,p'\in \Delta_k$, $W(e,\alpha p+ (1-\alpha)p')$ is a continuous function of $\alpha\in [0,1]$.

\medskip

\noindent \textbf{Monotonicity:} For any fixed $p\in \Delta_k$, $W(e,p)$ is strictly increasing in $e$ and $\lim_{e\uparrow +\infty} W(e,p)=1$. 

\medskip

\noindent \textbf{Competitiveness:} For any fixed $e\in E$, $\lim_{\bar e \uparrow +\infty} W(e,\delta_{\bar e})=0$.

\medskip

\noindent \textbf{Co-monotonicity:} For all $e\in E$ and $p,p'\in \Delta_k$, if $W(e,p)\ge W(e,p')$ then $W(e',p)\ge W(e',p')$ for all $e'\in E$.

\begin{theorem}\label{thm-characterization}
    The contest success function $W$ is an RPF if and only if it satisfies $e$-Continuity, $p$-Continuity, Monotonicity, Competitiveness, and Co-monotonicity. 
\end{theorem}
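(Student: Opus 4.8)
The plan is to prove the two directions separately, with almost all of the work in sufficiency, where Co-monotonicity is the engine.

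\textbf{Necessity.} Assuming $W(e,p)=1-F_e(s(p))$, each axiom reduces to a short computation using Definition \ref{def-RPF}. Monotonicity and $e$-Continuity are immediate from property~2. For Co-monotonicity, observe that, since $F_e$ is strictly increasing, $W(e,p)\ge W(e,p')$ is equivalent to $s(p)\le s(p')$, a condition not involving $e$. For Competitiveness, the market-clearing equation applied to $p=\delta_{\bar e}$ forces $F_{\bar e}(s(\delta_{\bar e}))=1-k$, i.e.\ $s(\delta_{\bar e})=F_{\bar e}^{-1}(1-k)$; since $F_{\bar e}(x)\to 0$ for each fixed $x$, this gives $s(\delta_{\bar e})\to+\infty$, hence $W(e,\delta_{\bar e})=1-F_e(s(\delta_{\bar e}))\to 0$. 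For $p$-Continuity, $s(\alpha p+(1-\alpha)p')$ is the unique root in $s$ of $\alpha\int[1-F_e(s)]dp(e)+(1-\alpha)\int[1-F_e(s)]dp'(e)=k$; the left-hand side is jointly continuous, strictly monotone in $s$, and straddles $k$, so the root (and therefore $W$ along the segment) depends continuously on $\alpha$ by a routine argument.

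\textbf{Sufficiency.} Fix a reference effort $e^*\in E$. The first step is to use Co-monotonicity to collapse $W$ onto a one-dimensional scale: since $W(e^*,p)=W(e^*,p')$ implies (applying Co-monotonicity in both directions) $W(e,p)=W(e,p')$ for every $e$, there is a well-defined map $g_e$ on $R:=\{W(e^*,p):p\in\Delta_k\}$ with $W(e,p)=g_e\big(W(e^*,p)\big)$ and $g_{e^*}=\mathrm{id}_R$. The second step records the regularity of $R$ and of the $g_e$: (i) $R$ is an interval, and so is $g_e(R)$ (the range of $W(e,\cdot)$), by applying the intermediate value theorem to $\alpha\mapsto W(e,\alpha p+(1-\alpha)p')$ via $p$-Continuity; (ii) each $g_e$ is strictly increasing on $R$ --- weakly increasing by Co-monotonicity, and strictly since $W(e,p)=W(e,p')$ would force $W(e^*,p)=W(e^*,p')$; (iii) a strictly increasing bijection between intervals is continuous, so each $g_e$ is continuous; (iv) $\inf R=0$ and $\lim_{t\downarrow 0}g_e(t)=0$, because Competitiveness drives both $W(e^*,\delta_{\bar e})$ and $W(e,\delta_{\bar e})$ to $0$ as $\bar e\to+\infty$; and (v) for each fixed $t\in R$, $e\mapsto g_e(t)$ is continuous and strictly increasing with limit $1$, since it equals $e\mapsto W(e,p)$ for a fixed $p$ with $W(e^*,p)=t$, to which $e$-Continuity and Monotonicity apply.

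The third step assembles the representation. Pick a continuous strictly decreasing bijection $\psi$ from $R$ onto an interval $S\subseteq\RR$ with $\sup S=+\infty$ (matching $\inf R=0$), set $s(p):=\psi\big(W(e^*,p)\big)$, and define $F_e$ on $S$ by $F_e(\tau):=1-g_e\big(\psi^{-1}(\tau)\big)$; extend each $F_e$ over the at most one bounded tail of $\RR\setminus S$ by multiplying the boundary value by a single fixed continuous strictly increasing function of the missing variable, the same for all $e$. Using the facts of step two, each $F_e$ is then a continuous strictly increasing CDF --- the limit $1$ at $+\infty$ being exactly (iv) --- the map $e\mapsto F_e(\tau)$ is continuous, strictly decreasing, and tends to $0$ as $e\to+\infty$ (by (v) and Monotonicity), and $W(e,p)=g_e\big(\psi^{-1}(s(p))\big)=1-F_e(s(p))$. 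Finally $\int[1-F_e(s(p))]\,dp(e)=\int W(e,p)\,dp(e)=k$ because $W$ is a CSF, so $s(p)$ is the unique root of the equation in Remark \ref{remark-s(p)}, and $W$ is an RPF.

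I expect the main obstacle to be the bookkeeping in the third step: extending the $F_e$ from the observed range $S$ to all of $\RR$ while simultaneously keeping every $F_e$ a genuine CDF (continuous, strictly increasing, correct limits at $\pm\infty$) \emph{and} preserving the joint conditions --- continuity in $e$, strict decrease in $e$, limit $0$ --- on the newly added points, across the cases for whether $\sup R$ equals $1$ and whether it is attained. The conceptual core, by contrast, lies entirely in the first two steps: Co-monotonicity, reinforced by the two continuity axioms, is precisely what forces every section $W(\cdot,p)$ to be a continuous strictly increasing reparametrization of one common ``difficulty'' index, which is then relabeled as the threshold $s(p)$.
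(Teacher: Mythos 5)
Your necessity direction and the overall architecture of sufficiency (use Co-monotonicity to write $W(e,p)=g_e(W(e^*,p))$, relabel the common index as a threshold, verify market clearing at the end) track the paper's proof closely, and your observation that a strictly increasing map between intervals is automatically continuous is a clean substitute for the paper's subsequence argument. But there is a genuine gap in your third step, and it is not mere bookkeeping. You pin down only one end of the range $R=\{W(e^*,p):p\in\Delta_k\}$: Competitiveness gives $\inf R=0$ and, via monotonicity of $g_e$, the right-tail limit $F_e(\tau)\to 1$. You never establish the other end, namely that $\sup_{p}W(e,p)=1$ for every $e$ (equivalently $\sup R=1$ and $\lim_{t\uparrow\sup R}g_e(t)=1$), which is exactly what is needed for the left-tail condition $\lim_{\tau\to-\infty}F_e(\tau)=0$, i.e.\ for each $F_e$ to be a genuine cdf. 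None of the axioms you invoke in steps one and two delivers this; the paper gets it from the market-clearing identity itself: taking $p=\alpha\delta_{e}$ with $\alpha\in(k,1]$, equation (\ref{eqn-BB}) forces $\alpha W(e,\alpha\delta_e)=k$, so $W(e,\alpha\delta_e)=k/\alpha\uparrow 1$ as $\alpha\downarrow k$, and combined with Co-monotonicity this shows the range of every $W(e,\cdot)$ is all of $(0,1)$ (the paper's Step 1 and the identity $\{W(e,p_x):x\in\RR\}=\{W(e,p):p\in\Delta_k\}=(0,1)$ in its Step 3).

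Your proposed patch --- extending $F_e$ over a bounded tail of $\RR\setminus S$ by multiplying ``the boundary value'' by a fixed increasing function --- cannot close this gap, and in fact breaks in precisely the case that actually occurs. Since in truth $\sup R=1$ is not attained and $g_e(t)\uparrow 1$ as $t\uparrow 1$, either you take $\inf S=-\infty$, in which case there is no tail to extend but the unproven limit $\lim_{\tau\to-\infty}F_e(\tau)=0$ is exactly the missing statement, or you take $\inf S$ finite, in which case the boundary value $1-\lim_{t\uparrow\sup R}g_e(t)$ equals $0$ and the multiplicative extension makes $F_e$ constant (hence not strictly increasing) on the tail; even in hypothetical intermediate cases the boundary value is only a pointwise limit in $t$, so its continuity and strict decrease in $e$ are not guaranteed. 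The fix is to do what the paper does before constructing $F_e$: use the Dirac measures $\alpha\delta_e$ and the budget constraint to show the range of $W(e,\cdot)$ is $(0,1)$ for every $e$ (strict inequalities following from Monotonicity), after which $R=(0,1)$, you may take $S=\RR$ with no extension, and both tail limits of $F_e$ follow from monotonicity of $g_e$ together with the full range.
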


\begin{proof}
    (Necessity) Suppose that $W$ is an RPF. First, $e$-Continuity is immediate since $W(e,p)=1-F_e(s(p))$ and the mapping $e\mapsto F_e(x)$ is assumed to be continuous for any fixed $x$. To show $p$-Continuity it is enough to prove that $s(\alpha p+(1-\alpha)p')$ is a continuous function of $\alpha$ for any given pair $p,p'\in \Delta_k$. Consider a sequence $\{\alpha_n\}$ with $\alpha_n\to \alpha\in (0,1)$. For each $n$ denote $s_n=s(\alpha_n p+(1-\alpha_n)p')$. The sequence $\{s_n\}$ is clearly bounded (e.g. $s_n$ is between $s(p)$ and $s(p')$ for all $n$), so it has a convergent subsequence $s_{n_l}\to s_0$. Note that for any $n$ we have
    $$k = \int(1-F_e(s_n))d(\alpha_n p+(1-\alpha_n)p')(e) = \alpha_n \int(1-F_e(s_n))dp(e) +(1-\alpha_n)\int(1-F_e(s_n))dp'(e).$$
    Therefore, using the dominated convergence theorem, 
    \begin{eqnarray*}
        k &=& \lim_l \alpha_{n_l} \int(1-F_e(s_{n_l}))dp(e) +(1-\alpha_{n_l})\int(1-F_e(s_{n_l}))dp'(e) =\\
        & & \alpha \int(1-F_e(s_0))dp(e) +(1-\alpha)\int(1-F_e(s_0))dp'(e),
    \end{eqnarray*}
    and therefore $s(\alpha p+(1-\alpha)p')=s_0$. We have established that $\{s_n\}$ has a subsequence converging to $s(\alpha p+(1-\alpha)p')$, but note that we could have applied the same argument starting from any subsequence of $\{s_n\}$ rather than from the original sequence. Thus, any subsequence of $\{s_n\}$ has a sub-subsequence that converges to $s(\alpha p+(1-\alpha)p')$, and hence the entire sequence converges to this limit.

    To show Monotonicity, fix some $p$ and suppose that $e<e'$. Then $W(e,p)=1-F_e(s(p))<1-F_{e'}(s(p))=W(e',p)$, where the inequality is by condition 2 in Definition \ref{def-RPF}. Also, by the same condition we have that $\lim_{e\uparrow +\infty} W(e,p)=\lim_{e\uparrow +\infty} 1-F_e(s(p))=1$.

    Next, consider the function $g(\bar e)=s(\delta_{\bar e})$. Applying (\ref{eqn-BB}) to the measure $\delta_{\bar e}$ we have that $F_{\bar e}(g(\bar e)) =1-k$. Thus, by condition 2 in Definition \ref{def-RPF}, $g$ is an increasing function. If $g$ was bounded, say by $\bar s$, then we would have $F_{\bar e}(\bar s)\ge F_{\bar e}(g(\bar e)) =1-k$ for all $\bar e$, which is impossible since condition 2 in Definition \ref{def-RPF} requires that $\lim_{\bar e \uparrow +\infty} F_{\bar e}(\bar s)=0$. It follows that $\lim_{\bar e \uparrow +\infty} s(\delta_{\bar e})=+\infty$ and therefore that $\lim_{\bar e \uparrow +\infty} W(e,\delta_{\bar e})=0$ for any fixed $e$, proving Competitiveness. 
    
    Finally, Co-monotonicity is immediate as well since $W(e,p)\ge W(e,p')$ implies that $s(p)\le s(p')$, from which $W(e',p)\ge W(e',p')$ follows.

    \bigskip
     
    (Sufficiency) Suppose $W$ satisfies $e$-Continuity, $p$-Continuity, Monotonicity, Competitiveness and Co-monotonicity. Fix some $e_0\in E$. We break the proof into several steps.

\medskip

    \noindent \textbf{Step 1:} We claim that $\{W(e_0,p) ~:~ p\in \Delta_k\}=(0,1)$. To see this, denote by $A$ the set on the left-hand side of the equality. From Competitiveness we immediately get that $\inf A=0$. Consider $p=\alpha\delta_{e_0}$ for some $\alpha>k$. By (\ref{eqn-BB}) we have that $\alpha W(e_0,p)=k$. Taking the limit as $\alpha$ converges to $k$ shows that $\sup A=1$. Finally, note that if for some $p,p'$ we have $W(e_0,p)\ge W(e_0,p')$, then $W(e_0,p)\ge W(e_0,\alpha p+(1-\alpha)p') \ge W(e_0,p')$ for any $\alpha\in (0,1)$. Indeed, suppose that was not the case, say $W(e_0,p)< W(e_0,\alpha p+(1-\alpha)p')$. Then by Co-monotonicity the same inequality would be true for every $e$. But then 
    \begin{eqnarray*}
        k &=& \int W(e,\alpha p+(1-\alpha)p')d(\alpha p+(1-\alpha)p')(e) =\\ 
        & & \alpha \int W(e,\alpha p+(1-\alpha)p')dp(e) +(1-\alpha) \int W(e,\alpha p+(1-\alpha)p')dp'(e) >\\
        & & \alpha \int W(e,p)dp(e) +(1-\alpha) \int W(e,p')dp'(e)=k,
    \end{eqnarray*}
    where the first and last equalities are from (\ref{eqn-BB}). A similar argument applies if $W(e_0,p')< W(e_0,\alpha p+(1-\alpha)p')$. Combining this with $p$-Continuity proves that indeed $A=(0,1)$.

\medskip

    \noindent \textbf{Step 2:} Let $\phi:(0,1)\to \RR$ be a strictly increasing and continuous function whose image is the entire real line. For any $p\in \Delta_k$ define $s(p)=\phi(1-W(e_0,p))$. Notice that $s(p)\ge s(p')$ if and only if $W(e_0,p) \le W(e_0,p')$. For any $x\in \RR$ choose $p_x$ such that $s(p_x)=x$; existence of $p_x$ is guaranteed by Step 1. Define $F_e(x)=1-W(e,p_x)$ for every $e\in E$. 

\medskip

    \noindent \textbf{Step 3:} In this step we show that each $F_e$ just defined is a continuous and strictly increasing cdf. Fix $e$. For monotonicity, let $x<x'$ and note that this implies $s(p_x)<s(p_{x'})$ and therefore $W(e_0,p_x) > W(e_0,p_{x'})$. It follows from Co-monotonicity that $W(e,p_x) > W(e,p_{x'})$ and hence $F_e(x)<F_e(x')$. 
    
    Next, we claim that for any $p$ there is $x$ such that $W(e,p)=W(e,p_x)$. Indeed, by construction we have that $W(e_0,p)=W(e_0,p_{s(p)})$, so by Co-monotonicity $W(e,p)=W(e,p_{s(p)})$ as well. Therefore, 
    $$\{W(e,p_x) ~:~ x\in \RR\}= \{W(e,p) ~:~ p\in \Delta_k\}=(0,1),$$
    where the last equality is as in Step 1 of this proof. Together with the monotonicity property just shown this implies that $\lim_{x\downarrow-\infty}F_e(x)=0$ and $\lim_{x\uparrow +\infty}F_e(x)=1$. 
    
    Finally, for continuity, let $x_n \rightarrow x$. Then there are $\underline{x}$, $\overline{x}$ for which $\underline{x}<x_n<\overline{x}$ for all $n$. Using the same argument as in Step 1, for each $n$ we can find $\alpha_n\in (0,1)$ such that $s(\alpha_n p_{\underline{x}}+(1-\alpha_n)p_{\overline{x}})=x_n$. Take a subsequence $\alpha_{n_l}$ converging to some $\alpha$ and observe that 
    \begin{equation}\label{eqn-continuity}
        x=\lim x_{n_l} = \lim s(\alpha_{n_l} p_{\underline{x}}+(1-\alpha_{n_l})p_{\overline{x}}) = s(\alpha p_{\underline{x}} + (1-\alpha)p_{\overline{x}}),
    \end{equation}
    where the last equality follows from $p$-Continuity. Consequently, 
    $$F_e(x_{n_l})= 1 - W(e,p_{x_{n_l}}) =  1 - W(e,\alpha_{n_l} p_{\underline{x}}+(1-\alpha_{n_l})p_{\overline{x}}) \to 1-W(e,\alpha p_{\underline{x}}+(1-\alpha) p_{\overline{x}})=F_e(x),$$
    where the first equality is by definition, the next follows from $s(\alpha_n p_{\underline{x}}+(1-\alpha_n)p_{\overline{x}})=x_n=s(p_{x_n})$ and Co-monotonicity, the limit is by $p$-Continuity, and the last equality is by (\ref{eqn-continuity}). Since we could have applied the same argument starting from any subsequence of $\{x_n\}$, it follows that $F_e(x_n)\to F_e(x)$.

\medskip

    \noindent \textbf{Step 4:} Here we argue that condition 2 of Definition \ref{def-RPF} is satisfied. Fix $x\in \RR$. Since $F_e(x)=1-W(e,p_x)$, and by $e$-Continuity, we have that $e\mapsto F_e(x)$ is continuous. This mapping is strictly decreasing by Monotonicity. And $\lim_{e\uparrow +\infty} F_e(x)=0$ also directly follows from Monotonicity.

\medskip

    \noindent \textbf{Step 5:} We conclude by showing that the collection $\{F_e\}$ represents $W$ as in condition 3 of Definition \ref{def-RPF}. Fix some $\bar p\in \Delta_k$ and denote $\bar x=s(\bar p)$. Then by construction we have that $s(p_{\bar x})=\bar x=s(\bar p)$ and therefore $W(e_0,\bar p) = W(e_0,p_{\bar x})$. By Co-monotonicity, $W(e,\bar p) = W(e,p_{\bar x})=1-F_e(s(\bar p))$ for all $e\in E$, and it follows from (\ref{eqn-BB}) that    
    $$k = \int W(e,\bar p)d \bar p(e) = \int (1-F_e(s(\bar p))) d \bar p(e).$$
    
\end{proof}


\section{Additive noise}\label{sec-additive}

The definition of an RPF allows for very general relationship between effort and performance. Indeed, the only substantial requirement is that higher effort implies higher distribution of performance in the sense of first-order stochastic dominance. In this section we add more structure and characterize the case where performance is obtained by adding a random shock to the effort choice. More explicitly, performance is equal to $e+X$, where $X$ is a random variable representing the noise term. 

\begin{definition}\label{def-additive}
   The contest success function $W$ is an Additive RPF (A-RPF) if there is a continuous and strictly increasing cdf $F$ such that $W$ has an RPF representation with $F_e(x)=F(x-e)$ for every $e\in E$.
\end{definition}

\subsection{Characterization}

To characterize Additive RPFs we introduce the following two additional properties. 

\bigskip

\noindent \textbf{Invariance to Common Shifts:}  For every $e,a\in E$ and $p\in \Delta_k$, $W(e,p)=W(e+a,p\dotplus a)$.

\medskip

\noindent \textbf{Invariance to $p$ Shifts:}  For every $e,e',a\in E$ and $p,p'\in \Delta_k$, if $W(e,p)=W(e',p')$, then $W(e,p\dotplus a)=W(e',p' \dotplus a)$.

\begin{theorem}\label{thm-A-RPF}
    The contest success function $W$ is an A-RPF if and only if it satisfies $e$-Continuity, Monotonicity, Invariance to Common Shifts, and Invariance to $p$ Shifts.
\end{theorem}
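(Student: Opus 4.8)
The plan is as follows. \emph{Necessity} is a direct check. Writing $W(e,p)=1-F(s(p)-e)$, the one fact to isolate first is that $s(p\dotplus a)=s(p)+a$: substituting $e\mapsto e+a$ in $\int[1-F(t-e)]\,d(p\dotplus a)(e)$ turns it into $\int[1-F((t-a)-e)]\,dp(e)$, which equals $k$ exactly when $t-a=s(p)$. Given this, $e$-Continuity and Monotonicity are immediate from continuity and monotonicity of $F$ together with $F(-\infty)=0$; Invariance to Common Shifts is the one-line identity $W(e+a,p\dotplus a)=1-F(s(p)+a-(e+a))=W(e,p)$; and Invariance to $p$ Shifts holds because $W(e,p)=W(e',p')$ forces $s(p)-e=s(p')-e'$ by strict monotonicity of $F$, a relation unchanged when $s(p),s(p')$ are replaced by $s(p)+a,s(p')+a$.

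For \emph{sufficiency}, fix $e_0\in E$. First I would build a single ``base'' function: market clearing applied to $\delta_{\bar e}$ gives $W(\bar e,\delta_{\bar e})=k$, and Invariance to Common Shifts makes $W(e,\delta_{\bar e})$ depend on $(e,\bar e)$ only through $t=e-\bar e$, so one can set $H(t):=W(e,\delta_{\bar e})$. By $e$-Continuity and Monotonicity, $H$ is continuous and strictly increasing on $\RR$ with $H(0)=k$ and $H(+\infty)=1$; write $\beta:=\inf H\ge 0$, so $H$ is a homeomorphism of $\RR$ onto $(\beta,1)$.

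The heart of the proof is the claim that \emph{every} $W(\cdot,p)$ is a horizontal translate of $H$: for each $p$ there is a unique $\sigma(p)\in\RR$ with $W(e,p)=H(e+\sigma(p))$ on $E$. The engine is the combination of the two Invariance axioms: if $W(e,p)=W(e',p')$ and $0<a<\min(e,e')$, then Invariance to $p$ Shifts gives $W(e,p\dotplus a)=W(e',p'\dotplus a)$, and Invariance to Common Shifts rewrites this as $W(e-a,p)=W(e'-a,p')$, so equalities of winning probabilities survive common downward shifts of the two effort arguments. Since $W(\cdot,p)$ is continuous, strictly increasing, and strictly below $1$ with supremum $1$, its range $(\lim_{e\downarrow 0}W(e,p),1)$ meets $(\beta,1)=\{W(e',\delta_{\bar e}):e',\bar e\in E\}$, so one can match $W(e^*,p)=W(e',\delta_{\bar e})$ for a suitable $e^*$ and some $(e',\bar e)$ with $e'$ as large as desired; propagating this matched equality downward, and using that $W(\cdot,\delta_{\bar e})=H(\cdot-\bar e)$ is a genuine translate on all of $E$, yields $W(f,p)=H(f+\sigma(p))$ for $f$ in an interval $(0,e^*]$ with $\sigma(p):=H^{-1}(W(e^*,p))-e^*$; choosing $e^*$ arbitrarily large extends this to all of $E$, and injectivity of $H$ gives uniqueness of $\sigma(p)$ and independence of the choices. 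As a byproduct this yields Co-monotonicity ($W(e,p)\ge W(e,p')\iff\sigma(p)\ge\sigma(p')$, independently of $e$), and, since $\sigma\mapsto\int H(e+\sigma)\,dp(e)$ is continuous and strictly monotone, $p$-Continuity as well, so this route also re-derives the hypotheses of Theorem~\ref{thm-characterization}.

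To finish, market clearing becomes $\int H(e+\sigma(p))\,dp(e)=k$ for all $p$, and I would set $F(u):=1-H(-u)$ and $s(p):=-\sigma(p)$, so that $W(e,p)=H(e+\sigma(p))=1-F(s(p)-e)$, the maps $F_e(x):=F(x-e)$ satisfy parts (1)--(2) of Definition~\ref{def-RPF}, and $s(p)$ is the unique solution of the market-clearing equation by strict monotonicity in $s(p)$ (cf.\ Remark~\ref{remark-s(p)}). The gap that must still be closed --- and the step I expect to be the main obstacle --- is showing $F$ is a genuine cdf, i.e.\ $\beta=\inf H=0$ (equivalently, that Competitiveness holds): since $\sigma(\delta_{\bar e})=H^{-1}(k)-\bar e\to-\infty$, one has $\inf_p\sigma(p)=-\infty$, so market clearing along competitions concentrated on ever-higher effort must be leveraged to force $W(e_0,\delta_{\bar e})=H(e_0-\bar e)\to 0$, i.e.\ $H(-\infty)=0$, which is precisely what guarantees $\lim_{u\uparrow+\infty}F(u)=1$. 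Once $\beta=0$ is established, $\{F_e\}$ represents $W$ and $W$ is an A-RPF.
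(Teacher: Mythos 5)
Your necessity argument coincides with the paper's (both hinge on the identity $s(p\dotplus a)=s(p)+a$), and your sufficiency route --- realizing every $W(\cdot,p)$ as a horizontal translate of the single profile $H(t)=W(\bar e+t,\delta_{\bar e})$ built from Dirac competitions, rather than the paper's construction of $F$ from one fixed pair $(e^*,p^*)$ with separate cases $x\ge 0$ and $x<0$ --- is a legitimate and even somewhat richer variant, since it re-derives Co-monotonicity and $p$-Continuity along the way. However, the step you flag at the end is not a deferrable detail: it cannot be closed from the four axioms in the statement. Concretely, fix $0<\beta<k$, let $G(x)=(1-\beta)\,e^x/(1+e^x)$, and for $p\in\Delta_k$ let $s(p)$ be the unique solution of $\int[1-G(s-e)]\,dp(e)=k$ (this integral decreases continuously from $p(E)>k$ to $\beta p(E)\le \beta<k$, so $s(p)$ exists and is unique); set $W(e,p)=1-G(s(p)-e)$. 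This $W$ is a CSF and satisfies $e$-Continuity, Monotonicity (since $G$ is strictly increasing with $G(-\infty)=0$), Invariance to Common Shifts and Invariance to $p$ Shifts (both via $s(p\dotplus a)=s(p)+a$, exactly as in the necessity proof). Yet it is not an A-RPF: if $W(e,p)=1-F(t(p)-e)$ for a genuine continuous, strictly increasing cdf $F$, then $G(x)=F(x+t(p)-s(p))$ for all $x<s(p)$; comparing two competitions on the overlap of these half-lines and using injectivity of $F$ forces the shift $t(p)-s(p)$ to equal a constant $c$, and since $s(\delta_{\bar e})=\bar e+G^{-1}(1-k)\to+\infty$ (here $1-k<1-\beta$ because $\beta<k$), we get $G(x)=F(x+c)$ on all of $\RR$, contradicting $\lim_{x\to+\infty}F(x)=1>1-\beta$. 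In the language of your construction, this $W$ has $\inf H=\beta>0$, so no market-clearing argument along high-effort competitions can force $H(-\infty)=0$.

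What your ``main obstacle'' really is, then, is a missing axiom: Competitiveness, which is in any case necessary (an A-RPF is an RPF, so Theorem \ref{thm-characterization} applies). Once Competitiveness is added to the hypothesis list, your proof closes in one line: $H(e-\bar e)=W(e,\delta_{\bar e})\to 0$ as $\bar e\uparrow+\infty$ gives $\inf H=0$, so $F(u)=1-H(-u)$ is a genuine cdf and the rest of your translation argument goes through. It is worth noting that the paper's own sufficiency proof is silent on exactly this point: the function $F$ defined in (\ref{eqn-F}) is shown to be continuous, strictly increasing, and to represent $W$, but it is never verified that $F(x)\to 1$ as $x\to+\infty$, and the example above shows this can fail under the stated hypotheses. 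So you have correctly isolated the one genuinely problematic step; the remedy is to strengthen the statement (add Competitiveness or an equivalent tail condition), not to search for a cleverer derivation of $\inf H=0$.
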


\begin{proof}
(Necessity) Suppose $W$ is an A-RPF with cdf $F$. Since $W$ is in particular an RPF, $e$-Continuity and Monotonicity follow from Theorem \ref{thm-characterization}. Fix some $p\in \Delta_k$ and $a\in E$. Then
$$\int [1-F(s(p)+a-e)]d(p\dotplus a)(e) =  \int [1-F(s(p)-e)]d(p)(e) = \int W(e,p)dp(e)=k,$$
where the first equality is a change of variable, the second by the assumption that $F$ represents $W$, and the last by (\ref{eqn-BB}). It follows that $s(p\dotplus a)=s(p)+a$. 

Now, for Invariance to Common Shifts, we have for every $e,a\in E$ and $p\in \Delta_k$,
$$W(e+a,p\dotplus a) = 1-F(s(p\dotplus a)-(e+a)) = 1-F(s(p)+a-(e+a)) = 1-F(s(p)-e)=W(e,p).$$
Similarly, suppose that $W(e,p)=W(e',p')$ holds for some $e,e'\in E$ and $p,p'\in \Delta_k$. Then $1-F(s(p)-e)=1-F(s(p')-e')$, which, by the strict monotonicity of $F$, implies that $s(p)-e=s(p')-e'$. But then for any $a\in E$,
$$s(p\dotplus a)-e = s(p)+a-e = s(p')+a-e' = s(p'\dotplus a)-e'.$$
This shows that $W(e,p\dotplus a)=W(e',p' \dotplus a)$ and hence that $W$ satisfies Invariance to $p$ Shifts.

\bigskip

(Sufficiency) Fix some arbitrary $(e^*,p^*)\in E\times \Delta_k$. Define $F$ by
\begin{eqnarray}\label{eqn-F}
F(x)=
\left\{ \begin{array}{ll}
 1-W(e^*,p^*\dotplus x) & \textit{ if } ~~ x\ge 0,\\
1-W(e^*-x,p^*) & \textit{ if } ~~ x<0.\\
\end{array} \right.
\end{eqnarray}

We start by showing that $F$ is strictly increasing. If $x_1<x_2\le 0$ then by Monotonicity $W(e^*-x_1,p^*)>W(e^*-x_2,p^*)$, so $F(x_1)<F(x_2)$. If $0\le x_1<x_2$ then
$$W(e^*,p^*\dotplus x_1) = W(e^*+(x_2-x_1),(p^*\dotplus x_1) \dotplus (x_2-x_1))) = W(e^*+(x_2-x_1),p^*\dotplus x_2) >  W(e^*,p^*\dotplus x_2),$$
where the first equality is by Invariance to Common Shifts, the next is just a simplification, and the inequality is by Monotonicity. Thus, $F(x_1)<F(x_2)$ and $F$ is strictly increasing.

Next, we argue that $F$ is continuous. For $x<0$ this follows immediately from $e$-Continuity. For $x\geq 0$, fix some $\bar e>x$. Then by Invariance to Common Shifts, 
$$W(e^*,p^*\dotplus x)=W(e^*+(\bar e-x),p^*\dotplus(x +(\bar e -x ) ))=W(e^*+\bar e-x,p^*\dotplus \bar e).$$
Thus, by $e$-Continuity we get that $F$ is continuous at any $x\geq 0$ as well.

It remains to show that $F$ represents $W$ as in Definition \ref{def-additive}. Fix some $p\in \Delta_k$. Our goal is to show that there is $s\in \RR$ such that $W(e,p)=1-F(s-e)$ for every $e\in E$. Indeed, if that is true then we would have 
$$\int [1-F(s-e)]dp(e) = \int W(e,p)dp(e) =k,$$
implying that $s$ is the performance cutoff for $p$ and therefore that $F$ represents $W$. 

Take some $e>e'$ two effort levels. By Monotonicity $W(e,p)\in (0,1)$, so there is a unique $s$ such that $W(e,p)=1-F(s-e)$. We need to show that $W(e',p)=1-F(s-e')$ and we distinguish between two cases.

\medskip

\underline{Case 1 -- $s\ge e$}: Here we have $W(e,p)=1-F(s-e) = W(e^*,p^*\dotplus (s-e))$. Thus, applying the axiom Invariance to $p$ Shifts with $a=e-e'$ we obtain
\begin{equation}\label{eqn-case1}
    W(e,p\dotplus(e-e')) = W(e^*, p^*\dotplus (s-e+e-e')) = W(e^*, p^*\dotplus (s-e')).
\end{equation}
We can now write
$$W(e',p) = W(e'+(e-e'),p\dotplus(e-e')) = W(e,p\dotplus(e-e') = W(e^*, p^*\dotplus (s-e')) = 1-F(s-e'),$$
where the first equality is by Invariance to Common Shifts, the next is a simplification, the next is from (\ref{eqn-case1}), and the last is from the definition of $F$ and using the assumption that $s\ge e>e'$. 

\medskip

\underline{Case 2 -- $e> s$}: In this case $W(e,p)=1-F(s-e) = W(e^*+e-s,p^*)$, so applying Invariance to $p$ Shifts with $a=e-e'$ gives
\begin{equation}\label{eqn-case2}
    W(e,p\dotplus(e-e')) = W(e^*+e-s, p^*\dotplus (e-e')).
\end{equation}
Let $b>0$ be large enough so that $s-e+b>0$. Then
\begin{eqnarray*}
    W(e',p) = W(e,p\dotplus(e-e')) &=& W(e^*+e-s, p^*\dotplus (e-e')) =\\
    & & W(e^*+e-s+(s-e+b), p^*\dotplus (e-e'+s-e+b)) =\\
    & & W(e^*+b, p^*\dotplus (s-e'+b)),
\end{eqnarray*}
where the first equality is by Invariance to Common Shifts, the next is by (\ref{eqn-case2}), the next is again by Invariance to Common Shifts, and the last is just a simplification. 

Now, if $s\ge e'$ then this last expression is equal to $W(e^*, p^*\dotplus (s-e'))$ by Invariance to Common Shifts, so we obtain 
$W(e',p)=W(e^*, p^*\dotplus (s-e')) = 1-F(s-e')$ as needed. If on the other hand $s< e'$ then again by Invariance to Common Shifts the last expression is equal to $W(e^*+e'-s, p^*)$ and hence $W(e',p)=W(e^*+e'-s, p^*) = 1-F(s-e')$. This completes the proof.
\end{proof}

\subsection{Uniqueness of the representation}

The following result states that if $W$ is an Additive RPF then the noise distribution is identified up to translations. 

\begin{proposition}
    Suppose that $W$ is an Additive RPF that can be represented by the cdf $F_1$, i.e., $W(e,p)=1-F_1(s_1(p)-e)$ for all $e,p$. Then $F_2$ also represents $W$ if and only if there is $t\in \RR$ such that $F_2(x)=F_1(x+t)$ for all $x\in \RR$. 
\end{proposition}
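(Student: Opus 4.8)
The plan is to prove both directions. For sufficiency (the ``if'' direction), suppose $F_2(x) = F_1(x+t)$ for some $t \in \RR$. I claim that setting $s_2(p) = s_1(p) + t$ gives a valid RPF representation. Indeed, $1 - F_2(s_2(p) - e) = 1 - F_1(s_1(p) + t - e - \ldots)$ — wait, I need to be careful with signs: $F_2(s_2(p)-e) = F_1(s_2(p) - e + t) = F_1(s_1(p) - e + 2t)$, which is not right. Let me instead take $s_2(p) = s_1(p) - t$, so that $F_2(s_2(p) - e) = F_1(s_1(p) - t - e + t) = F_1(s_1(p) - e)$, giving $W(e,p) = 1 - F_2(s_2(p)-e)$ for all $e,p$. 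I then need to verify that $s_2(p)$ so defined is indeed \emph{the} cutoff associated with $F_2$, i.e., that $\int [1 - F_2(s_2(p) - e)]\,dp(e) = k$; but this integral equals $\int [1 - F_1(s_1(p) - e)]\,dp(e) = k$ by assumption, and uniqueness of the cutoff (Remark \ref{remark-s(p)}) finishes this direction. I should also note that $F_2(x) = F_1(x+t)$ is automatically a continuous, strictly increasing cdf since $F_1$ is.

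For necessity, suppose both $F_1$ and $F_2$ represent $W$, with cutoff functions $s_1$ and $s_2$. Fix any $e_0 \in E$. For each $p \in \Delta_k$ we have $1 - F_1(s_1(p) - e_0) = W(e_0,p) = 1 - F_2(s_2(p) - e_0)$, hence $F_1(s_1(p) - e_0) = F_2(s_2(p) - e_0)$. The key step is to pin down the relationship between $s_1$ and $s_2$. From the proof of Theorem \ref{thm-characterization}, Step 1, the set $\{W(e_0,p) : p \in \Delta_k\}$ is all of $(0,1)$; since $F_1$ is a continuous strictly increasing bijection from $\RR$ onto $(0,1)$, the set $\{s_1(p) - e_0 : p \in \Delta_k\}$ is all of $\RR$, and likewise for $s_2$. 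So as $p$ ranges over $\Delta_k$, the pair $(u,v) := (s_1(p) - e_0,\, s_2(p) - e_0)$ traces out a set on which $F_1(u) = F_2(v)$, with $u$ attaining every real value. Define $h(u) = F_2^{-1}(F_1(u))$; this is a well-defined, continuous, strictly increasing bijection of $\RR$, and $v = h(u)$ whenever $(u,v)$ arises from some $p$. In particular $s_2(p) = s_1(p) + (h(s_1(p)-e_0) - (s_1(p)-e_0)) = \ldots$ — more cleanly, $s_2(p) - e_0 = h(s_1(p) - e_0)$ for all $p$.

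It remains to show $h$ is a translation, i.e., $h(u) = u + t$ for a constant $t$. For this I invoke Invariance to Common Shifts: for any $a \in E$, the necessity argument inside the proof of Theorem \ref{thm-A-RPF} shows $s_i(p \dotplus a) = s_i(p) + a$ for $i = 1,2$. Applying $s_2(q) - e_0 = h(s_1(q) - e_0)$ to $q = p \dotplus a$ gives $s_2(p) + a - e_0 = h(s_1(p) + a - e_0)$, i.e., $h(u) + a = h(u + a)$ for every $u$ in the (full) range $\{s_1(p)-e_0\}$ and every $a \in E = (0,\infty)$. So $h(u+a) - h(u) = a$ for all real $u$ and all $a > 0$, which forces $h(u) = u + h(0) =: u + t$. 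Then $s_2(p) - e_0 = (s_1(p) - e_0) + t$, so $s_2(p) = s_1(p) + t$, and $F_2(s_1(p) + t - e_0) = F_1(s_1(p) - e_0)$; since $s_1(p) - e_0$ ranges over all of $\RR$, substituting $x = s_1(p) - e_0$ yields $F_2(x + t) = F_1(x)$ for all $x$, i.e., $F_2(x) = F_1(x - t)$, which is of the required form (with $-t$ in place of $t$).

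The main obstacle I anticipate is the bookkeeping of signs and the careful justification that $h$ is well-defined as a \emph{function} of $u$ alone — that is, that $s_1(p) = s_1(p')$ implies $s_2(p) = s_2(p')$. This follows because $F_1(s_1(p)-e_0) = W(e_0,p)$ determines $W(e_0,p)$ from $s_1(p)$, and then $W(e_0,p) = F_2(s_2(p)-e_0)$ with $F_2$ injective recovers $s_2(p)$; so the map $s_1(p) \mapsto s_2(p)$ is genuinely a function, and it equals $h$ on the range of $s_1(\cdot) - e_0$, which is all of $\RR$. Everything else is a routine Cauchy-type functional equation argument once Invariance to Common Shifts is brought to bear.
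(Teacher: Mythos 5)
Your proof is correct, but your ``only if'' direction takes a genuinely different route from the paper's. The paper evaluates both representations at Dirac measures: the market clearing condition (\ref{eqn-BB}) pins down the cutoffs in closed form, $s_i(\delta_{\bar e})=F_i^{-1}(1-k)+\bar e$, so the difference $s_1(\delta_{\bar e})-s_2(\delta_{\bar e})$ is immediately the constant $t=F_1^{-1}(1-k)-F_2^{-1}(1-k)$, and then $F_2(x)=F_1(x+t)$ follows because $s_2(\delta_{\bar e})-e$ sweeps all of $\RR$ as $\bar e$ and $e$ vary. You instead keep $e_0$ fixed, vary $p$ over all of $\Delta_k$, introduce $h=F_2^{-1}\circ F_1$ so that $s_2(p)-e_0=h(s_1(p)-e_0)$, and force $h$ to be a translation via the shift-equivariance $s_i(p\dotplus a)=s_i(p)+a$ (extracted from the necessity argument of Theorem \ref{thm-A-RPF}) together with the surjectivity of $W(e_0,\cdot)$ onto $(0,1)$ from Step 1 of Theorem \ref{thm-characterization}. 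Both arguments are sound; yours makes explicit that Invariance to Common Shifts (through the equivariance of the cutoff) is what drives the translation structure, at the cost of importing two pieces of machinery from elsewhere in the paper, while the paper's argument is shorter and self-contained, needing only (\ref{eqn-BB}) at Dirac measures and no functional-equation step. Your ``if'' direction (candidate cutoff $s_2(p)=s_1(p)-t$, verified by uniqueness of the cutoff) coincides with the paper's.
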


\begin{proof}
    (If): Suppose $F_1$ represents $W$ and fix $t$. Define $F_2(x)=F_1(x+t)$. Then for any $p$ we have
    $$\int [1-F_2(s_1(p)-t-e)]dp(e) = \int [1-F_1(s_1(p)-e)]dp(e) = k,$$
    where the first equality is by the definition of $F_2$ and the second from the characterizing property of $s_1(p)$. It follows that $s_2(p)=s_1(p)-t$. Thus, for all $e$ and $p$,
    $$W(e,p) = 1-F_1(s_1(p)-e) = 1-F_1(s_2(p)+t-e) = 1-F_2(s_2(p)-e)$$
    as claimed.

\medskip

    (Only if): Suppose both $F_1$ and $F_2$ represent $W$. Thus, $F_1(s_1(p)-e)=F_2(s_2(p)-e)$ for all $e$ and $p$. Fix some $\bar e>0$ and recall that $\delta_{\bar e}$ is the Dirac measure on $\bar e$. Then $1-F_1(s_1(\delta_{\bar e})-\bar e) = k$ by condition (\ref{eqn-BB}), so that $s_1(\delta_{\bar e})=F_1^{-1}(1-k)+\bar e$. By the same argument, $s_2(\delta_{\bar e})=F_2^{-1}(1-k)+\bar e$. Therefore, denoting $t=F_1^{-1}(1-k)-F_2^{-1}(1-k)$, we get
    $s_1(\delta_{\bar e}) - s_2(\delta_{\bar e})=t$. It follows that for every $\bar e$ and $e$
    $$F_2(s_2(\delta_{\bar e})-e) = F_1(s_1(\delta_{\bar e})-e) = F_1(s_2(\delta_{\bar e})+t-e),$$
    where the first equality is by assumption and the second by the definition of $t$ (note that $t$ is independent of $\bar e$). 

    To complete the proof we need to show that for any $x\in \RR$ we can find $\bar e,e>0$ such that $x=s_2(\delta_{\bar e})-e$, or equivalently, $x=F_2^{-1}(1-k)+\bar e -e$. This can be achieved by taking some $\bar e > x-F_2^{-1}(1-k)$ and letting $e=\bar e-(x-F_2^{-1}(1-k))$.
\end{proof}




\section{Comparing RPFs}\label{sec-comparison}

In this section we compare RPFs in terms of how well the probability of winning reflects effort rather than noise. Clearly, this is a key determinant of agents' incentives to exert effort in the contest. We start with the following definition.

\begin{definition}\label{def-compare}
    Consider two collections of cdfs $\mathcal{F}=\{F_e\}_{e\in E}$ and $\tilde{\mathcal{F}}=\{\tilde F_e\}_{e\in E}$ as in Definition \ref{def-RPF}. Say that $\mathcal{F}$ \emph{better reflects effort} than $\tilde{\mathcal{F}}$ if for every $0<k<1$, every $p\in \Delta_k$, and every $a\in E$ it holds that
\begin{equation}\label{eqn-compare}
    \int_a^\infty W(e,p)dp(e) \ge \int_a^\infty \tilde W(e,p)dp(e),
\end{equation}
    where $W$ and $\tilde W$ are the RPFs generated by $\mathcal{F}$ and $\tilde{\mathcal{F}}$, respectively.\footnote{Note that $W$ and $\tilde W$ depend also on the budget $k$; our definition requires the inequality to hold for all $k$. Note also that $\int_E W(e,p)dp(e) =k= \int_E \tilde W(e,p)dp(e)$, so we can use standard notions of stochastic dominance to make comparisons between $W$ and $\tilde W$.}
\end{definition}

In words, $\mathcal{F}$ better reflects effort than $\tilde{\mathcal{F}}$ if for every effort distribution chosen by the agents, the effort distribution among winners based on $\mathcal{F}$ first-order stochastically dominates the effort distribution among winners based on $\tilde{\mathcal{F}}$. When there is very little noise and each $F_e$ is concentrated around $e$, the winners will be those who put in most effort; as noise increases, more low-effort agents will get lucky and the effort distribution among winners will stochastically decrease. 

It turns out that the notion of better reflecting effort is closely related to an informativeness criterion due to \citet{lehmann1988comparing}, who was interested in comparing statistical experiments for determining the location of a distribution. The connection between large contests and Lehmann's condition has already been pointed out by \citet{adda2023grantmaking}. They showed that an increase in Lehmann's informativeness implies that the equilibrium entry cutoff increases, reducing participation by low quality types. The next proposition shows that Lehmann's informativeness precisely characterizes the noisiness of the contest's outcome.  

\begin{proposition}\label{prop-comparison}
    The collection $\mathcal{F}$ better reflects effort than the collection $\tilde{\mathcal{F}}$ if and only if the function $F^{-1}_e(\tilde F_e(x))$ is non-decreasing in $e$ for any fixed $x\in \RR$.
\end{proposition}

\begin{proof}
    Assume first that the condition in the proposition holds and fix some $0<k<1$ and $p\in \Delta_k$. Suppose that for some $e\in E$ we have $W(e,p)\ge \tilde W(e,p)$. Denoting by $s(p)$ and $\tilde s(p)$ the cutoffs corresponding to $\mathcal{F}$ and $\tilde{\mathcal{F}}$, respectively, we thus have that $F_{e}(s(p)) \le \tilde F_{e}(\tilde s(p))$, or equivalently $s(p)\le F^{-1}_{e}(\tilde F_{e}(\tilde s(p)))$. Thus, for every $e'>e$, 
    $$F_{e'}[s(p)] \le F_{e'}[F^{-1}_{e}(\tilde F_{e}(\tilde s(p)))] \le F_{e'}[F^{-1}_{e'}(\tilde F_{e'}(\tilde s(p)))] = \tilde F_{e'}[\tilde s(p)],$$
    where the first inequality is by the monotonicity of $F_{e'}$, and the second follows from the condition of the proposition applied for $x=\tilde s(p)$ and from the monotonicity of $F_{e'}$. It follows that $W(e',p)\ge \tilde W(e',p)$ for all $e'>e$. 

    By an analogous argument one obtains that if $W(e,p)\le \tilde W(e,p)$ then the same is true for all $e'<e$. Thus, $W$ and $\tilde W$ satisfy a single-crossing property. Since the integrals satisfy $\int W(e,p)dp(e) =k= \int \tilde W(e,p)dp(e)$, the inequality in (\ref{eqn-compare}) must hold for all $a\in E$.  

    Conversely, fix $k\in (0,1)$ and consider the effort distribution $p=\mu \delta_e + (1-\mu) \delta_{e'}$ for some $e<e'$ and some $\mu\in (0,1)$. Then from (\ref{eqn-compare}) we must have that $W(e',p)\ge \tilde W(e',p)$, and since both integrals over the entire domain $E$ coincide we also have $W(e,p)\le \tilde W(e,p)$. Thus, $F_{e'}(s(p)) \le \tilde F_{e'}(\tilde s(p))$ and $F_{e}(s(p)) \ge \tilde F_{e}(\tilde s(p))$. It follows that
    $$F^{-1}_e[\tilde F_e(\tilde s(p))] \le F^{-1}_e[F_e(s(p))] = s(p) \le F^{-1}_{e'}[\tilde F_{e'}(\tilde s(p))].$$
    Now, since $\mu \tilde F_e+(1-\mu)\tilde F_{e'}$ is a continuous cdf, for every $x\in \RR$ there is some $k\in (0,1)$ such that $\tilde s(p)=x$. This proves the condition in the proposition.
\end{proof}

As can be seen in the above proof, under Lehmann's condition not only that the induced CSF $W$ (first-order) stochastically dominates $\tilde{W}$, but the two satisfy a stronger single-crossing property. The connection between Lehmann's informativeness and single-crossing is well-known in the context of statistical decision theory, see for example \citet{quah2009comparative} and \citet{athey2018value}.

\citet{lehmann1988comparing} provides equivalent conditions for his informativeness ranking that can be applied in the special case of Additive RPFs. Translated to our setup it yields the following corollary.
 
\begin{corollary}\label{coro-compare}
\cite[Theorem 5.2]{lehmann1988comparing} Let $F,\tilde F$ be strictly increasing cdfs with corresponding densities $f,\tilde f$ such that $\log(f)$ and $\log(\tilde{f})$ are concave. Consider the collections of cdfs $\mathcal{F}=\{F(x-e)\}_{e\in E}$ and $\tilde{\mathcal{F}}=\{\tilde{F}(x-e)\}_{e\in E}$. Then $\mathcal{F}$ better reflects effort than $\tilde{\mathcal{F}}$ if and only if
$F^{-1}(q)- \tilde{F}^{-1}(q)$ is weakly decreasing in $q\in (0,1)$. In particular, if $F^{-1}$ and $\tilde{F}^{-1}$ are differentiable then $\mathcal{F}$ better reflects effort than $\tilde{\mathcal{F}}$ if and only if $f[F^{-1}(q)]\ge \tilde{f}[\tilde{F}^{-1}(q)]$ for all $q\in(0,1)$.
\end{corollary}


\section{Applications}\label{sec-applications}

In this final section we illustrate the usefulness of the RPF framework by revisiting some of the classic questions in the contests literature. Our goal is not to present a comprehensive study of these issues, on which dozens of papers have been written, but to demonstrate that this model is tractable and may in fact simplify the analysis relative to models with a finite number of agents.

\subsection{Equilibrium with identical agents}

Consider a unit mass of agents, each of which chooses whether to participate in a contest, and if so how much effort $e\in E$ to exert. Staying out costs zero, while exerting effort $e$ has a cost $c(e)$, where $c:E\to \RR_+$ is twice differentiable and satisfies the standard conditions $\lim_{e\downarrow 0} c(e)=\lim_{e\downarrow 0} c'(e)=0$, $c'>0$, and $c''>0$. An agent who wins gets a prize of $V>0$. The total utility of an agent that exerts effort $e$ is $u(V)-c(e)$ if they win and $u(0)-c(e)$ if they do not, where the utility function $u$ is strictly increasing, (weakly) concave, and satisfies $u(0)=0$.   

Instead of explicitly considering strategy profiles, we focus on the population distribution of efforts they induce; let $W(e,p)$ be the probability of winning for an agent who exerts effort $e\in E$ given $p\in \Delta_k$.\footnote{If $p(E)\le k$ then set $W(e,p)=1$ for every $e\in E$.} The expected payoff of such agent is
$$U(e,p):=W(e,p)u(V)-c(e).$$

Say that $\bar e$ is a best response to $p$ if $\bar e\in \argmax_eU(e,p)$, and that $p$ is an equilibrium distribution (an equilibrium, for short) if $p$-almost every $\bar e$ is a best response to $p$.\footnote{Note that if $\bar e$ is a best response to $p$ then in particular $U(\bar e, p)\ge 0$, so staying out of the contest is not a profitable deviation. This follows from $\lim_{e\downarrow 0} U(e,p)\ge 0$.} We have the following.


\begin{proposition}\label{prop-symmetric-eq}
    Suppose that $W$ is an Additive RPF with associated cdf $F$, that $F$ admits a strictly positive and differentiable density $f$, and that $c''(e)>-f'(s)u(V)$ for all $e\in E$ and $s\in \RR$. Then the contest has a unique equilibrium $p=\delta_{e^*}$, where $e^*$ solves
    \begin{equation}\label{eqn-FOC}
    c'(e^*)=f(F^{-1}(1-k))u(V).
\end{equation}
\end{proposition}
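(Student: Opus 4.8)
The plan is to derive everything from one structural fact: for \emph{every} fixed $p\in\Delta_k$ the map $e\mapsto U(e,p)$ is strictly concave. Using the RPF representation $W(e,p)=1-F(s(p)-e)$ one computes $\partial_e^2 U(e,p)=-f'(s(p)-e)\,u(V)-c''(e)$, which is strictly negative exactly by the hypothesis $c''(e)>-f'(s)u(V)$ (taken at $s=s(p)-e$). Before invoking this I would first settle existence and uniqueness of the number $e^*$ in (\ref{eqn-FOC}), whose right-hand side is a fixed positive number. The hypothesis, holding for all $s\in\RR$, forces $c''(e)\ge u(V)\sup_s(-f'(s))$ for every $e$; this supremum is finite (else no cost function could satisfy the hypothesis) and strictly positive, since a strictly positive density on $\RR$ cannot be non-decreasing. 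Hence $c''$ is bounded away from $0$, so $c'$ — continuous, strictly increasing, with $\lim_{e\downarrow 0}c'(e)=0$ — is a bijection from $E$ onto $(0,\infty)$, and $e^*$ exists and is unique.

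Next I would verify that $\delta_{e^*}$ is an equilibrium. Applying (\ref{eqn-BB}) to $p=\delta_{e^*}$ gives $W(e^*,\delta_{e^*})=k$, i.e.\ $s(\delta_{e^*})=F^{-1}(1-k)+e^*$, so that $\partial_e U(e,\delta_{e^*})\big|_{e=e^*}=f\big(F^{-1}(1-k)\big)u(V)-c'(e^*)=0$ by (\ref{eqn-FOC}). Strict concavity of $U(\cdot,\delta_{e^*})$ then forces $e^*=\argmax_{e\in E}U(e,\delta_{e^*})$, so $e^*$ is a best response to $\delta_{e^*}$; since $\delta_{e^*}$-almost every effort equals $e^*$, this makes $\delta_{e^*}$ an equilibrium. (The outside option is slack: $U(e^*,\delta_{e^*})\ge\lim_{e\downarrow 0}U(e,\delta_{e^*})=[1-F(s(\delta_{e^*}))]u(V)>0$.)

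For uniqueness, let $p$ be any equilibrium. Its set of best responses is nonempty, and by strict concavity of $U(\cdot,p)$ it is a single point $e_0$; being interior to the open set $E$, $e_0$ satisfies $f(s(p)-e_0)u(V)=c'(e_0)$. As $p$-almost every effort is a best response, $p=\beta\delta_{e_0}$ with $\beta:=p(E)\in(k,1]$, and (\ref{eqn-BB}) gives $s(p)-e_0=F^{-1}(1-k/\beta)$, so the first-order condition becomes $c'(e_0)=f\big(F^{-1}(1-k/\beta)\big)u(V)$, i.e.\ $e_0=\tilde e(k/\beta)$ with $\tilde e(q):=(c')^{-1}\!\big(u(V)f(F^{-1}(1-q))\big)$. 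If $\beta=1$ this is precisely (\ref{eqn-FOC}), hence $e_0=e^*$ and $p=\delta_{e^*}$. To exclude $\beta<1$: then a positive mass stays out, so optimality of staying out gives $\max_e U(e,p)\le 0$, while $U(e_0,p)\ge 0$ because participating is optimal for the $\beta$ mass; thus $U(e_0,p)=0$, i.e.\ $\psi(q)=0$ for $q:=k/\beta\in(k,1)$, where $\psi(q):=q\,u(V)-c(\tilde e(q))$ and I used $W(e_0,p)=k/\beta=q$. But $\psi(k)=k\,u(V)-c(e^*)=U(e^*,\delta_{e^*})>0$, and since $F'=f>0$ (so $(F^{-1})'(y)=1/f(F^{-1}(y))$) and $c''>0$, differentiating the identity $c'(\tilde e(q))=u(V)f(F^{-1}(1-q))$ yields
$$\psi'(q)=\frac{u(V)}{c''(\tilde e(q))}\Big(c''(\tilde e(q))+u(V)\,f'\big(F^{-1}(1-q)\big)\Big)>0$$
again by the hypothesis. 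Hence $\psi>0$ on $(k,1)$, contradicting $\psi(q)=0$; therefore $\beta=1$ and $p=\delta_{e^*}$.

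The step I expect to be the main obstacle is ruling out partial-participation equilibria ($\beta<1$) in the uniqueness argument: this is the only place where the $c''$ hypothesis does real work beyond supplying concavity, and it requires differentiating through the implicitly defined ``best response at a given winning probability'' $\tilde e(\cdot)$ and then recognizing that the resulting sign requirement is exactly the one assumed. Everything else — strict concavity, the shape of $c'$, and the verification for $\delta_{e^*}$ — is routine once the identity for $\psi'$ is in hand.
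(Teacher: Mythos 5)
Your proof is correct, and its core coincides with the paper's argument: the hypothesis $c''(e)>-f'(s)u(V)$ makes $U(\cdot,p)$ strictly concave in $e$ for every $p$, so each $p$ admits a unique best response, any equilibrium is a point mass, market clearing (\ref{eqn-BB}) gives $s(\delta_{\bar e})=F^{-1}(1-k)+\bar e$, and the first-order condition collapses to (\ref{eqn-FOC}). You add two things the paper treats tersely. First, you establish \emph{existence} of a solution $e^*$ to (\ref{eqn-FOC}) by noting the hypothesis forces $c''\ge u(V)\sup_s(-f'(s))>0$, so $c'$ maps $E$ onto $(0,\infty)$; the paper only asserts ``clearly there is a unique $\bar e$,'' and your observation is what actually guarantees existence. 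Second, you explicitly rule out partial-participation equilibria $p=\beta\delta_{e_0}$ with $\beta<1$, a case the paper's proof skips when it jumps from ``the best response is unique'' to ``$p=\delta_{\bar e}$''; under the natural reading that the non-participating mass must also be optimizing, this step is genuinely needed for the uniqueness claim. However, your $\psi$-monotonicity argument for that step is far heavier than necessary: since $e_0$ maximizes $U(\cdot,p)$ over $E$, one has $U(e_0,p)\ge\lim_{e\downarrow 0}U(e,p)=\bigl(1-F(s(p))\bigr)u(V)>0$, so staying out (payoff $0$) is strictly suboptimal against any $p$ and $\beta<1$ is impossible in one line --- the same slackness observation you already make parenthetically when verifying $\delta_{e^*}$. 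The $\psi'$ computation is nevertheless correct, so this is a matter of economy rather than validity.
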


\begin{proof}
Fix some $p\in \Delta_k$. We first argue that the best response to $p$ is unique. Indeed, we have that 
$$U(e,p)=W(e,p)u(V)-c(e) = (1-F(s(p)-e))u(V)-c(e),$$
so the condition in the proposition implies that $U(e,p)$ is strictly concave in $e$. Further, the first order condition is given by 
$$f(s(p)- e)u(V)-c'(e)=0.$$
Since $f$ is strictly positive, and since $\lim_{e\downarrow 0} c'(e)=0$, the difference $f(s(p)- e)u(V)-c'(e)$ is positive for all $e$ sufficiently close to zero; and this difference is clearly negative for all $e$ large enough. Thus, a best response exists and is unique.

It follows that if $p$ is an equilibrium then $p=\delta_{\bar e}$ for some $\bar e\in E$. By the market clearing condition (\ref{eqn-BB}) we have that $1-F(s(\delta_{\bar e})-\bar e)=k$, implying that $s(\delta_{\bar e})=F^{-1}(1-k)+\bar e$. The first order condition from above then becomes 
$$f(F^{-1}(1-k)+\bar e -e)u(V)-c'(e)=0.$$
For $\bar e$ to be a best response to $p=\delta_{\bar e}$ it is thus necessary and sufficient that 
$$f(F^{-1}(1-k))u(V)-c'(\bar e)=0.$$
This is condition (\ref{eqn-FOC}) from the proposition, and clearly there is a unique $\bar e$ that satisfies it.

\end{proof}

We note that the second order condition in the proposition holds for example if $f'$ is bounded and $c(e)=Ae^2$ for sufficiently large $A$. Of course, even without global concavity $e^*$ from (\ref{eqn-FOC}) may still be an equilibrium.

\subsection{Comparative statics}

Equation (\ref{eqn-FOC}) which characterizes equilibrium effort facilitates simple comparative statics with respect to the model parameters.\footnote{In what follows we assume that $e^*$ given by (\ref{eqn-FOC}) is indeed an equilibrium. This limits the scope of our conclusions.} Since $c'$ is assumed to be increasing, the direction of change of $e^*$ is the same as that of the right-hand side of (\ref{eqn-FOC}). Thus, equilibrium effort increases in the value of the prize $V$, and the dependence on the budget $k$ is determined by the behavior of the function $f(F^{-1}(\cdot))$. In particular, if $F$ is symmetric around zero and the density satisfies $sf'(s)\le 0$ for all $s\in \RR$, then $e^*$ is increasing for $k\in (0,0.5)$ and decreasing for $k\in (0.5,1)$, so that maximal effort is obtained at $k=0.5$. 

We can also predict how an increase in noise would affect equilibrium effort. \citet{drugov2020noise} study this question in the classic \citet{lazear1981rank} setup with a finite number of agents. They show that equilibrium effort is higher under $F$ than under $\tilde F$ in every contest (any number of players and any prize structure) if and only if $f(F^{-1}(q))\ge \tilde{f}(\tilde{F}^{-1}(q))$ for every quantile $q\in (0,1)$. From (\ref{eqn-FOC}) it immediately follows that a similar conclusion holds in our setup.\footnote{To provide intuition for their result, \citet[pages 2-3]{drugov2020noise} informally consider a large contest version of their model and arrive to the same equilibrium condition as in our equation (\ref{eqn-FOC}). We are grateful to Mikhail Drugov for pointing us to their related papers \citep{drugov2020noise, drugov2020tournament}.} Furthermore, from Corollary \ref{coro-compare} we know that, with log-concave densities,  $f[F^{-1}(\cdot)]\ge \tilde{f}[\tilde{F}^{-1}(\cdot)]$ is equivalent to $F$ better reflecting effort than $\tilde F$. This gives the connection between the contest success function induced by the noise structure and the level of equilibrium effort.

\subsection{The optimal prize structure}

In the contest just described, suppose that the principal's goal is to maximize the equilibrium effort $e^*$. The principal has total budget $B=1$ and can choose the fraction of winners $k$ and the size of each award $V$ subject to the constraint $kV=1$. \citet{krishna1998winner} study this question in the same framework as ours but with a small number of agents (at most four), and their analysis was later extended to any finite number of agents by \citet{drugov2020tournament}.\footnote{Other papers that study the optimal design of prizes in different types of contests include \citet{moldovanu2001optimal}, \citet{fang2020turning} and \citet{olszewski2020performance}, among many others.} An important result of these papers is that if $F$ is symmetric around zero then it is optimal to allocate the entire budget to agents in the top half of the performance ranking.\footnote{\citet{krishna1998winner} also require that the density $f$ is unimodal, but \citet{drugov2020tournament} show that this is not needed. Both papers allow for heterogeneous prizes, while we restrict the principal to give the same prize to all winners.} The same result holds in our setup with a continuum of agents.

\begin{proposition}\label{prop-optimal-k}
Assume that $F$ is symmetric around zero and let $e^*(k)$ be the solution to (\ref{eqn-FOC}) with $V=\frac{1}{k}$. Then $e^*(k)<e^*(1-k)$ for every $k\in (0.5,1)$, so allocating awards to more than half of the population is never optimal. 
\end{proposition}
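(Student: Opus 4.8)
The plan is to reduce the monotonicity of $e^*(k)$ to that of the right-hand side of (\ref{eqn-FOC}). Writing $V=B/k$, the equilibrium effort solves $c'(e^*(k))=g(k)$ where
\[
g(k):=f\bigl(F^{-1}(1-k)\bigr)\,u(B/k).
\]
Since $c''>0$, the map $c'$ is strictly increasing, so $e^*(k)$ is decreasing on $[0.5,1)$ if and only if $g$ is. I would therefore analyze the two factors of $g$ separately, showing each is strictly positive and non-increasing on $[0.5,1)$, with at least one strictly decreasing; the product is then strictly decreasing, which gives the claim.

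For the factor $u(B/k)$: the argument $B/k$ is strictly decreasing in $k$ and $u$ is strictly increasing, so $u(B/k)$ is strictly decreasing, and it is strictly positive because $u(B/k)>u(0)=0$. For the factor $f\bigl(F^{-1}(1-k)\bigr)$: as $F$ is continuous and strictly increasing, $F^{-1}$ is well defined, continuous and strictly increasing, and symmetry of $F$ around zero gives $F(0)=1/2$, hence $F^{-1}(1/2)=0$. Consequently, for $k\ge 0.5$ we have $1-k\le 1/2$ and therefore $s(k):=F^{-1}(1-k)\le 0$, with $s(k)$ strictly decreasing in $k$. The hypothesis $sf'(s)\le 0$ forces $f'(s)\ge 0$ for $s<0$, so $f$ is non-decreasing on $(-\infty,0]$; composing with the decreasing function $s(k)$ shows $k\mapsto f(s(k))$ is non-increasing on $[0.5,1)$, and it is strictly positive since $f>0$. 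Putting the two factors together, $g$ is strictly decreasing on $[0.5,1)$, so $c'(e^*(k))$ is strictly decreasing, and since $c'$ is strictly increasing, $e^*(k)$ is decreasing on $[0.5,1)$.

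There is no genuine obstacle here beyond bookkeeping; the only point requiring care is the role of the threshold $k=0.5$. It is precisely because $F$ is symmetric that the cutoff $s(k)=F^{-1}(1-k)$ sits on the left branch $(-\infty,0]$ exactly when $k\ge 0.5$, which is the branch on which $sf'(s)\le 0$ makes $f$ non-decreasing. For $k<0.5$ the cutoff would instead lie on $(0,\infty)$, where $f$ is non-increasing, so $f(s(k))$ would be non-decreasing in $k$ and could offset the decreasing factor $u(B/k)$; hence the restriction in the statement.
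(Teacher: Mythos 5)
Your proof is correct and follows essentially the same route as the paper's: reduce monotonicity of $e^*(k)$ to that of $f\bigl(F^{-1}(1-k)\bigr)u(B/k)$ via strict monotonicity of $c'$, note $u(B/k)$ is decreasing, and use symmetry plus $sf'(s)\le 0$ to get that $F^{-1}(1-k)\le 0$ and hence $f\bigl(F^{-1}(1-k)\bigr)$ is non-increasing on $[0.5,1)$. Your added bookkeeping (positivity of both factors so the product is strictly decreasing, and the remark on why $k=0.5$ is the relevant threshold) is a fine, slightly more explicit rendering of the same argument.
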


\begin{proof}
For $k\in (0.5,1)$ we have $u\left(\frac{1}{k}\right)< u\left(\frac{1}{1-k}\right)$. Also, $f(F^{-1}(k))= f(F^{-1}(1-k))$ by the symmetry of $F$. Therefore,
$$f(F^{-1}(k))u\left(\frac{1}{k}\right) < f(F^{-1}(1-k))u\left(\frac{1}{1-k}\right),$$
so by (\ref{eqn-FOC}) we are done.
\end{proof}

When agents are risk neutral, finding the optimal $k$ boils down to finding the maximum of the hazard ratio $\frac{f}{1-F}$. Indeed, the right-hand side of (\ref{eqn-FOC}) in this case is given by $f(F^{-1}(1-k))\frac{1}{k}$, which after a change of variable $s=F^{-1}(1-k)$ becomes $\frac{f(s)}{1-F(s)}$. Thus, $e^*(k_1)>e^*(k_2)$ if and only if the hazard ratio $\frac{f}{1-F}$ is larger at the point $F^{-1}(1-k_1)$ than at the point $F^{-1}(1-k_2)$. In particular, if $F$ has an increasing hazard ratio, as is the case for the normal distribution, then $e^*(k)$ is decreasing in $k$ so reducing the fraction of winners increases equilibrium effort. But if the hazard ratio is maximized at some finite $s$, e.g.\ in the family of Student's $t$ distributions, then there would be an interior optimal $k\in (0,0.5)$. Figure \ref{figure-optimal-k} illustrates this for several distributions. \citet[Proposition 2]{drugov2020tournament} obtain a similar result with finitely many agents: When the hazard ratio is increasing, a winner-take-all tournament is optimal, and when it is decreasing it is optimal to spread the budget evenly among all players except the one who ranked last.

\begin{figure}
\centering
\includegraphics[width=.3\textwidth]{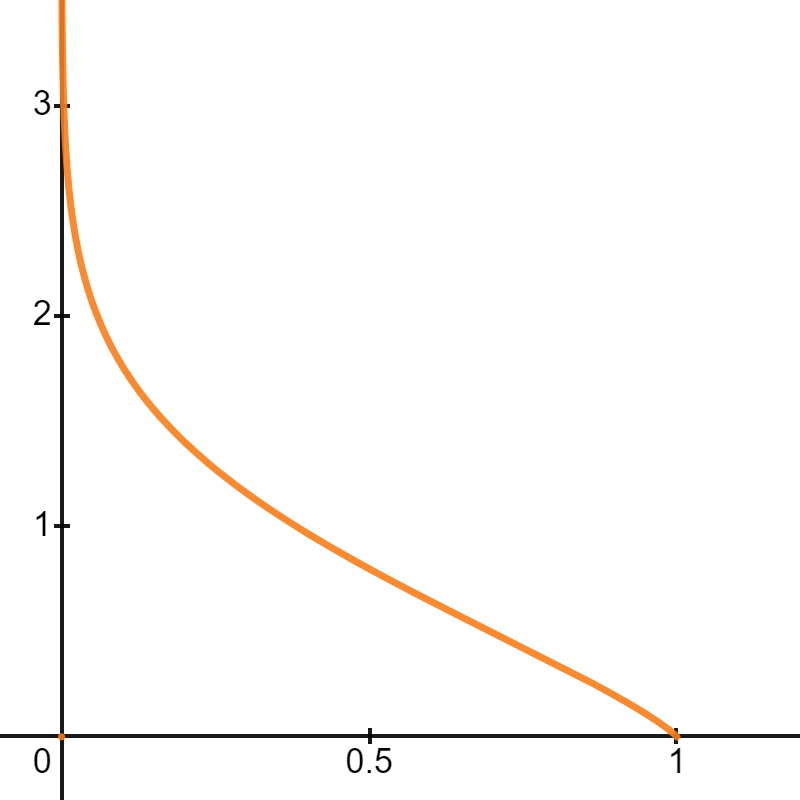}\hfill
\includegraphics[width=.3\textwidth]{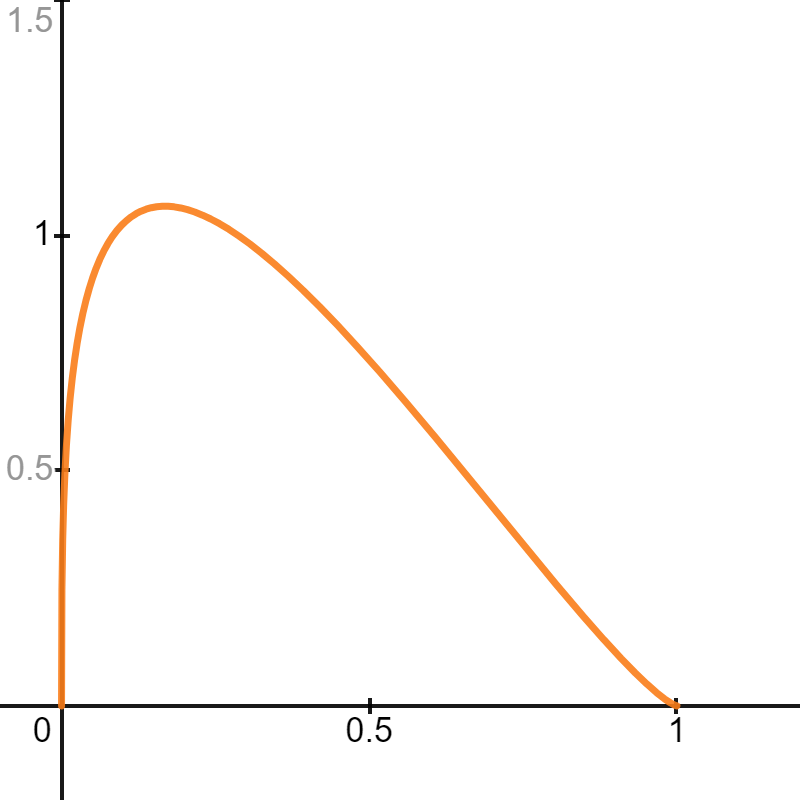}\hfill
\includegraphics[width=.3\textwidth]{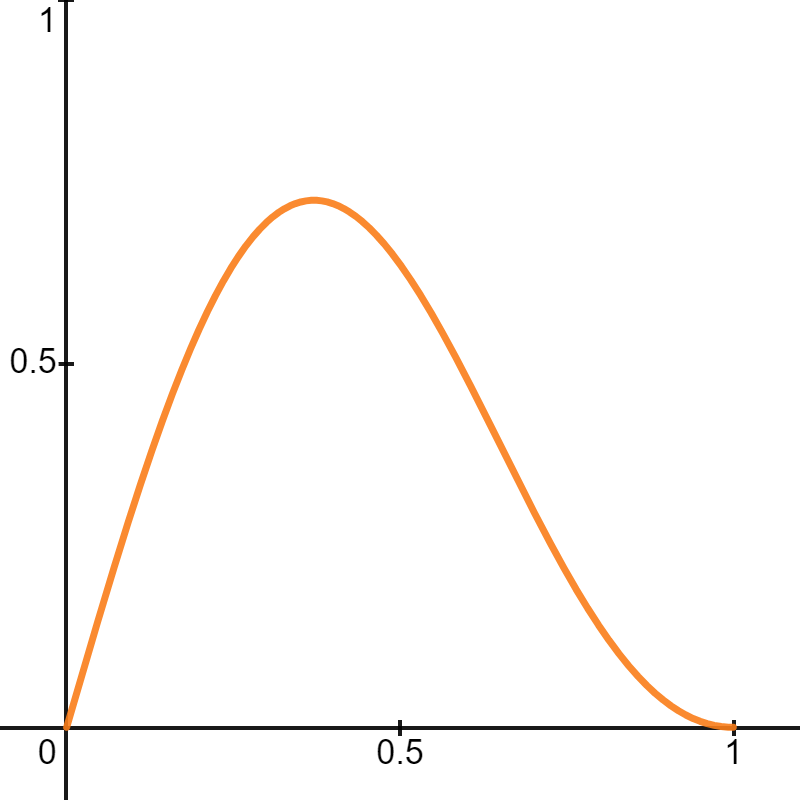}
\caption{The figure shows the graphs of $\frac{1}{k}f(F^{-1}(1-k))$ as a function of $k\in (0,1)$ for the cases of a standard normal distribution (left panel), Student's $t$ distribution with 3 degrees of freedom (middle panel), and Student's $t$ distribution with 1 degree of freedom (right panel). In the first case the graph decreases in $k$, implying that a lower fraction of winners yields higher equilibrium effort, while in the other two cases the maximum is attained at an interior point (with higher optimal $k$ corresponding to less degrees of freedom).}
\label{figure-optimal-k}
\end{figure}

\subsection{Rent dissipation}

A key issue in the contests literature has been the extent of rent dissipation -- the relationship between the rents obtained by winning and the total cost of effort exerted in order to capture them.\footnote{See for example \citet{nitzan1994modelling} for an early survey in the case of a single prize, and \citet{berry1993rent} and \citet{clark1996multi} for the case of multiple winners.} Let us return to the contest model described above with fixed $k$, and for expositional reasons suppose that agents are risk neutral, $u(V)=V$, and that the cost function is quadratic, $c(e)=Ae^2$. With these assumptions, the equilibrium effort $e^*$ from Proposition \ref{prop-symmetric-eq} is given by $e^*=\frac{V}{2A} f(F^{-1}(1-k))$, and therefore the equilibrium cost of effort is
$$c(e^*) = A(e^*)^2 = \frac{V^2}{4A} (f(F^{-1}(1-k)))^2.$$
Total rents in the contest are $kV$, and it is straightforward to calculate the dissipation of rents and how it is affected by the size of the award $V$, the budget $k$, and the noise distribution $F$. In particular, the ratio of costs to rents
$$ \frac{V^2}{4A} (f(F^{-1}(1-k)))^2 / kV = \frac{V}{4Ak} (f(F^{-1}(1-k)))^2 $$
is less than one for all sufficiently small $V$ and becomes larger than one when $V$ crosses a threshold. Thus, we can have either under-dissipation or over-dissipation of rents depending on the size of the award, unlike in the classic Tullock contest with linear costs where this ratio is independent of $V$.

\bibliographystyle{plainnat}
\bibliography{success_functions}

\begin{thebibliography}{31}
\providecommand{\natexlab}[1]{#1}
\providecommand{\url}[1]{\texttt{#1}}
\expandafter\ifx\csname urlstyle\endcsname\relax
  \providecommand{\doi}[1]{doi: #1}\else
  \providecommand{\doi}{doi: \begingroup \urlstyle{rm}\Url}\fi

\bibitem[Adda and Ottaviani(2023)]{adda2023grantmaking}
J{\'e}r{\^o}me Adda and Marco Ottaviani.
\newblock Grantmaking, grading on a curve, and the paradox of relative evaluation in nonmarkets.
\newblock \emph{The Quarterly Journal of Economics}, page qjad046, 2023.

\bibitem[Athey and Levin(2018)]{athey2018value}
Susan Athey and Jonathan Levin.
\newblock The value of information in monotone decision problems.
\newblock \emph{Research in Economics}, 72\penalty0 (1):\penalty0 101--116, 2018.

\bibitem[Azrieli(2024)]{azrieli2024temporary}
Yaron Azrieli.
\newblock Temporary exclusion in repeated contests.
\newblock \emph{arXiv preprint arXiv:2401.06257}, 2024.

\bibitem[Berry(1993)]{berry1993rent}
S~Keith Berry.
\newblock Rent-seeking with multiple winners.
\newblock \emph{Public Choice}, 77\penalty0 (2):\penalty0 437--443, 1993.

\bibitem[Clark and Riis(1996)]{clark1996multi}
Derek~J Clark and Christian Riis.
\newblock A multi-winner nested rent-seeking contest.
\newblock \emph{Public Choice}, 87:\penalty0 177--184, 1996.

\bibitem[Clark and Riis(1998)]{clark1998contest}
Derek~J Clark and Christian Riis.
\newblock Contest success functions: an extension.
\newblock \emph{Economic Theory}, 11:\penalty0 201--204, 1998.

\bibitem[Do{\u{g}}an et~al.(2023)Do{\u{g}}an, Karag{\"o}zo{\u{g}}lu, Keskin, and Sa{\u{g}}lam]{dougan2023large}
Serhat Do{\u{g}}an, Emin Karag{\"o}zo{\u{g}}lu, Kerim Keskin, and {\c{C}}a{\u{g}}r{\i} Sa{\u{g}}lam.
\newblock Large tullock contests.
\newblock \emph{Journal of Economics}, pages 1--11, 2023.

\bibitem[Drugov and Ryvkin(2020{\natexlab{a}})]{drugov2020noise}
Mikhail Drugov and Dmitry Ryvkin.
\newblock How noise affects effort in tournaments.
\newblock \emph{Journal of Economic Theory}, 188:\penalty0 105065, 2020{\natexlab{a}}.

\bibitem[Drugov and Ryvkin(2020{\natexlab{b}})]{drugov2020tournament}
Mikhail Drugov and Dmitry Ryvkin.
\newblock Tournament rewards and heavy tails.
\newblock \emph{Journal of Economic Theory}, 190:\penalty0 105116, 2020{\natexlab{b}}.

\bibitem[Fang et~al.(2020)Fang, Noe, and Strack]{fang2020turning}
Dawei Fang, Thomas Noe, and Philipp Strack.
\newblock Turning up the heat: The discouraging effect of competition in contests.
\newblock \emph{Journal of Political Economy}, 128\penalty0 (5):\penalty0 1940--1975, 2020.

\bibitem[Fu and Lu(2012)]{fu2012micro}
Qiang Fu and Jingfeng Lu.
\newblock Micro foundations of multi-prize lottery contests: a perspective of noisy performance ranking.
\newblock \emph{Social Choice and Welfare}, 38\penalty0 (3):\penalty0 497--517, 2012.

\bibitem[Krishna and Morgan(1998)]{krishna1998winner}
Vijay Krishna and John Morgan.
\newblock The winner-take-all principle in small tournaments.
\newblock \emph{Advances in applied microeconomics}, 7:\penalty0 61--74, 1998.

\bibitem[Lahkar and Mukherjee(2023)]{lahkar2023optimal}
Ratul Lahkar and Saptarshi Mukherjee.
\newblock Optimal large population tullock contests.
\newblock \emph{Oxford Open Economics}, page odad003, 2023.

\bibitem[Lahkar and Sultana(2023)]{lahkar2023rent}
Ratul Lahkar and Rezina Sultana.
\newblock Rent dissipation in large population tullock contests.
\newblock \emph{Public Choice}, 197\penalty0 (1):\penalty0 253--282, 2023.

\bibitem[Lazear and Rosen(1981)]{lazear1981rank}
Edward~P Lazear and Sherwin Rosen.
\newblock Rank-order tournaments as optimum labor contracts.
\newblock \emph{Journal of political Economy}, 89\penalty0 (5):\penalty0 841--864, 1981.

\bibitem[Lehmann(1988)]{lehmann1988comparing}
EL~Lehmann.
\newblock Comparing location experiments.
\newblock \emph{The Annals of Statistics}, pages 521--533, 1988.

\bibitem[Luce(1959)]{luce1959individual}
R~Duncan Luce.
\newblock \emph{Individual choice behavior: A theoretical analysis}.
\newblock John Wiley \& Sons, Inc., 1959.

\bibitem[Moldovanu and Sela(2001)]{moldovanu2001optimal}
Benny Moldovanu and Aner Sela.
\newblock The optimal allocation of prizes in contests.
\newblock \emph{American Economic Review}, 91\penalty0 (3):\penalty0 542--558, 2001.

\bibitem[Morgan et~al.(2018)Morgan, Sisak, and V{\'a}rdy]{morgan2018ponds}
John Morgan, Dana Sisak, and Felix V{\'a}rdy.
\newblock The ponds dilemma.
\newblock \emph{The Economic Journal}, 128\penalty0 (611):\penalty0 1634--1682, 2018.

\bibitem[M{\"u}nster(2009)]{munster2009group}
Johannes M{\"u}nster.
\newblock Group contest success functions.
\newblock \emph{Economic Theory}, 41\penalty0 (2):\penalty0 345--357, 2009.

\bibitem[Myerson and W{\"a}rneryd(2006)]{myerson2006population}
Roger~B Myerson and Karl W{\"a}rneryd.
\newblock Population uncertainty in contests.
\newblock \emph{Economic Theory}, 27:\penalty0 469--474, 2006.

\bibitem[Nitzan(1994)]{nitzan1994modelling}
Shmuel Nitzan.
\newblock Modelling rent-seeking contests.
\newblock \emph{European Journal of Political Economy}, 10\penalty0 (1):\penalty0 41--60, 1994.

\bibitem[Olszewski and Siegel(2016)]{olszewski2016large}
Wojciech Olszewski and Ron Siegel.
\newblock Large contests.
\newblock \emph{Econometrica}, 84\penalty0 (2):\penalty0 835--854, 2016.

\bibitem[Olszewski and Siegel(2019)]{olszewski2019bid}
Wojciech Olszewski and Ron Siegel.
\newblock Bid caps in large contests.
\newblock \emph{Games and Economic Behavior}, 115:\penalty0 101--112, 2019.

\bibitem[Olszewski and Siegel(2020)]{olszewski2020performance}
Wojciech Olszewski and Ron Siegel.
\newblock Performance-maximizing large contests.
\newblock \emph{Theoretical Economics}, 15\penalty0 (1):\penalty0 57--88, 2020.

\bibitem[Quah and Strulovici(2009)]{quah2009comparative}
John K-H Quah and Bruno Strulovici.
\newblock Comparative statics, informativeness, and the interval dominance order.
\newblock \emph{Econometrica}, 77\penalty0 (6):\penalty0 1949--1992, 2009.

\bibitem[Rai and Sarin(2009)]{rai2009generalized}
Birendra~K Rai and Rajiv Sarin.
\newblock Generalized contest success functions.
\newblock \emph{Economic Theory}, 40:\penalty0 139--149, 2009.

\bibitem[Sisak(2009)]{sisak2009multiple}
Dana Sisak.
\newblock Multiple-prize contests--the optimal allocation of prizes.
\newblock \emph{Journal of Economic Surveys}, 23\penalty0 (1):\penalty0 82--114, 2009.

\bibitem[Skaperdas(1996)]{skaperdas1996contest}
Stergios Skaperdas.
\newblock Contest success functions.
\newblock \emph{Economic Theory}, 7:\penalty0 283--290, 1996.

\bibitem[Tullock(1980)]{tullock1980efficient}
Gordon Tullock.
\newblock Efficient rent seeking. jm buchanan, rd tollison, g. tullock, eds., toward a theory of the rent seeking society.
\newblock \emph{A \& M University Press, College Station, TX: Texas}, 1980.

\bibitem[Tullock(2001)]{tullock2001efficient}
Gordon Tullock.
\newblock Efficient rent seeking.
\newblock In \emph{Efficient rent-seeking: Chronicle of an intellectual quagmire}, pages 3--16. Springer, 2001.

\end{thebibliography}





\end{document}